\newif\ifanon
\newif\ifsubmission
\newif\ifenhanced
\documentclass[runningheads]{llncs}

\makeatother
\usepackage[hidelinks, bookmarksopen=true]{hyperref}
\usepackage{bookmark}

\usepackage{preamble}

\title{Slow and Steady: Preventing MEV with Verifiable Delays}
\date{\today}

\begin{document}

\ifanon
    \author{ Anonymous }
\else
    \author{
        Zeta Avarikioti\inst{1,2}
        \and
        Dimitris Karakostas\inst{2}
        \and\\
        Karl Kreder\inst{3}
        \and
        Shreekara Shastry\inst{3}
    }
    \institute{
        TU Wien \and Common Prefix \and Dominant Strategies
    }
\fi

\maketitle

\begin{abstract}
    Our work presents a defense mechanism against Maximal Extractable Value
    (MEV) opportunities in distributed ledgers. The mechanism relies on the idea
    of enforcing a verifiable delay when generating transactions, such that a
    block creator cannot react to the appearance of a MEV opportunity
    without breaking liveness.
    \ifenhanced
    We then enhance our mechanism with Proof-of-Work (PoW) to defend
    predictable MEV opportunities. For both mechanisms, we 
    \else
    We
    \fi
    present positive results both in the Byzantine setting and in a game
    theoretic model of rational participants. We additionally present negative
    bounds that outline the limitations of this line of defense. Finally, we
    explore real-world implementation details of verifiable delays and show
    that, based on historical MEV data, our mechanism could realistically help
    prevent most existing MEV threats.
\end{abstract}

\section{Introduction}\label{sec:introduction}

A Hare was making fun of the Tortoise one day for being so
slow~\cite{aesop-fable}. In the end, the Tortoise beat the Hare.

In a fast-paced, high-frequency environment, slowing down merits
extra consideration, as it can sometimes yield surprising results.
The blockchain-based decentralized finance~\cite{werner2022sok} ecosystem is
such a fast-paced environment. Built on the premise that financial transactions
are bundled in blocks to form an immutable chain, the ecosystem that started
with Bitcoin~\cite{nakamoto2008bitcoin} was soon boosted by the introduction of
autonomous programs called smart contracts~\cite{buterin2013ethereum}. These
programs enabled the existence of programmable money~\cite{lee2021programmable},
which allows digital money to have automated behavior through computer programs, sometimes completely
autonomous.

However, this new paradigm brought along new threats. One
prominent family of hazards is Maximal Extractable Value (MEV)
opportunities (also
called Miner/Blockchain Extractable Value opportunities)~\cite{daian2020flash,obadia2021unity,qin2022quantifying}. MEV opportunities
refer to the ability of a block creator to inject or reorganize
transactions to maximize their profit. For instance, a malicious block creator
can front/back-run a user or ``sandwich'' a user's transaction between two
transactions of their
own. In this case, the execution of the user's transaction, within the context
of a blockchain-hosted application (\eg a smart contract), results in increased
profits for the block creator at the user's and/or application developer's expense.

MEV opportunities in distributed ledgers are problematic
mainly in two aspects.
First, assets flow from users to block creators. When the latter
have control
over the execution of a user's transaction and can potentially exploit it, the
system's usability is hurt.
Second, MEV opportunities pose an offchain incentive to block creators, which
cannot be controlled by the ledger protocol. Consequently, if such incentives
are high enough, profit-driven parties may divert from the protocol
in order to claim an MEV opportunity. 
Such behavior could jeopardize the system's security, potentially causing a failure of the consensus mechanism.

In this work, we explore a novel approach for countering the effects of MEV
opportunities. Inspired by the fabled Tortoise, we use verifiable delays to
slightly slow down the execution of a distributed ledger, in order to make it
more robust and stable by eliminating MEV opportunities.

\subsection*{Our Contributions and Roadmap}

Our main idea is preventing a party from quickly responding to a
transaction that is broadcast to the network. We achieve this by requiring
each transaction to be evaluated by a Verifiable Delay Function
(VDF)~\cite{boneh2018verifiable}, thus enforcing a delay between the time a user
selects a transaction's payload and when a transaction object, which is valid
for submission to the protocol's participants, is constructed.

In more detail,~\cref{sec:model} outlines the model within which our analysis
takes place and briefly reviews relevant preliminaries.

\cref{sec:protocol} describes our main defense mechanism. The defense is
implemented via a VDF-based transformation of a ledger's transaction validity
predicate. In our analysis, we define the notion of blockchain input causality,
where an adversary cannot create a valid transaction that meaningfully depends
on an honest transaction before the latter is finalized. Our analysis first
assumes the existence of a distributed ledger protocol that, in the absence of
MEV opportunities, guarantees liveness and is an equilibrium and is compliant
\wrt censorship. Then, we show that any such protocol can be transformed to a
new protocol which, when MEV opportunities do exist, guarantees blockchain input
causality and is also an equilibrium and compliant \wrt censorship. We also
present a negative game theoretic result, by describing an equilibrium where all
parties deviate from the protocol in a specific manner that tries to claim an
MEV opportunity.

\ifenhanced
\cref{sec:protocol-enhanced} considers a more elaborate setting, where the
adversary can predict the existence of a future MEV opportunity with some probability. We again present a transformation,
which makes use of both a VDF and Proof-of-Work (PoW). Using PoW enables us to
present a positive result in this enhanced setting, where a transformed protocol
is an equilibrium as long as the expected MEV opportunity is lower than the PoW
cost of constructing a transaction.
\fi

Finally,~\cref{sec:implementation} discusses various implementation details. We
review existing VDF proposals and evaluate their suitability in our application,
in terms of time and space efficiency. Then, we explore historic MEV
data. These data show that the overwhelming majority of MEV opportunities is in
the order of transaction fees, so our proposals could be realistically
implemented and protect most MEV cases, thus making distributed ledgers more
robust and user-friendly.

% \ifsubmission
% We note that, due to space constraints, the proofs of our results are included
% in~\cref{sec:proofs}, along with extra preliminaries in~\cref{sec:preliminaries}.
% \fi

\subsection*{Related Work}

MEV opportunities~\cite{daian2020flash,obadia2021unity,qin2022quantifying} have
been extensively studied in recent years, with various proposed
countermeasures~\cite{DBLP:conf/aft/HeimbachW22,DBLP:journals/corr/abs-2212-05111,DBLP:conf/defi/Yang0HCYZ24}
broadly falling in three categories.

First are MEV auction platforms~\cite{flashbots,mev-boost,bloxroute}, which act
as intermediaries between users (who offer transactions) and miners (who offer
block space). Particularly in Ethereum, MEV auctions are natively supported
via Proposer-Builder Separation (PBS)~\cite{buterin2022state}.
However, PBS requires trust among participants, since users
need to trust the block builders to guarantee their transactions' privacy and
safeguard them against MEV attacks, while miners/validators need to trust
the block creators and relayers to craft blocks correctly. In comparison, our
proposal requires no trusted third party and relies only on the 
ledger's security.

Second are time-based ordering properties, which rely on the notion of
transaction order
fairness~\cite{DBLP:conf/crypto/Kelkar0GJ20,DBLP:conf/fc/CachinMSZ22,DBLP:journals/iacr/KiayiasLS23,DBLP:conf/asiapkc/KelkarDK22,DBLP:conf/icbc2/CachinM24,DBLP:journals/corr/abs-2411-09981}.
These solutions prevent order manipulation by explicitly defining the properties
that transaction ordering must satisfy. These properties are receive-order
fairness~\cite{DBLP:conf/crypto/Kelkar0GJ20,DBLP:conf/asiapkc/KelkarDK22,juels2020fair}
where transactions are published in a first-come-first-serve manner, or
relative-order fairness~\cite{zhang2020byzantine,kursawe2020wendy} where the
relative order of two transactions should be guaranteed.
Our idea relies on the observation that relative order-fairness can be guaranteed
directly from liveness, for any two transactions that are submitted with a large
enough delay between them. Therefore, instead of trying to guarantee
order fairness in general, which is particularly tricky, we focus on
increasing the delay between a transaction that offers a MEV opportunity and any
transaction that tries to attack it.

Third, there exist content-agnostic ordering proposals. Here,
transactions are first committed on-chain and their ordering is
decided at a later point in time, when the block creators can no longer manipulate it.
This is achieved by encrypting a transaction's content upon its
publication~\cite{asayag2018fair}, implementing
a randomness beacon that decides the ordering of past
transactions via commit-reveal schemes~\cite{asayag2018fair,DBLP:conf/opodis/AlposACY23}, distributed key generation~\cite{DBLP:journals/iacr/KavousiLJD23}, or relying on
trusted hardware~\cite{cryptoeprint:2017/1153}. A practical drawback of these
solutions is that block creators do not know a priori which transactions will be
eventually published in the final ledger and in which order. As a result, they
cannot compute their expected rewards, \ie their utility, in order to make an
informed, rational choice on how to participate. Additionally, the computation
of the ledger's state, \eg account balances or smart contract state, is
necessarily delayed until transaction ordering is eventually decided. 
Furthermore, securely implementing a public randomness beacon is highly
non-trivial and, although various protocols
exist~\cite{DBLP:conf/eurosp/KavousiWJ24}, if implemented incorrectly may
result in bias hazards.
Our mechanism does not directly hide a transaction's content when block
creators receive it, thus avoiding the drawbacks of such solutions. However,
similar to these works, our solution increases the time needed to finalize
transactions, albeit the increase in our case corresponds to local computations
(that is the evaluation of a VDF by the user), instead of being part of the
distributed ledger protocol itself. Furthermore, our mechanism can be
implemented on an ad hoc basis, \eg on the smart contract level, and does not
need to be incorporated in the distributed ledger protocol.

\section{Model}\label{sec:model}

Our analysis assumes a multi-party setting, following Canetti's formulation of the 
``real world''~\cite{cryptoeprint:2000/067}. An ``environment'' program $\env$ drives
the execution of a protocol, by spawning for each party $\party$ an instance
of an ``interactive Turing machine'' (ITM) which executes the protocol. Interaction between
parties is controlled by a program $C$, \st $(\env, C)$ create a system of ITMs.
We assume ``locally polynomially bounded'' systems to ensure polynomial time
execution.
% We also require sequential execution of parties, \ie first activating the
% adversary $\adversary$ and then the other parties in order.

% \noindent\emph{Notation}
% In the rest of the paper we will use the following notation:
% \begin{itemize}
%     \item $a.b$: the element $b$ of a complex object $a$
%     \item $a_{:i}, a_{-i:}$: the first (and last resp.) $i$ bits of a number $a$
%     \item $|\block', \block|$: the distance between blocks $\block'$ and $\block$ of a chain,
%         \ie the number of blocks in the sub-chain starting from $\block'$ and ending at $\block$;
%         in some cases we will somewhat overload the notation, \st the values $\block', \block$ may be blocks or references to blocks (that is their hash)
%     \item $A \concat A'$: the concatenation of $A$ and $A'$, which might be blocks,
%         chains, or (ledgers of) transactions
%     \item $\chain[i]$: the $i$-th block of a chain $\chain$; if $i=-1$ then it denotes the 
%         last block (also called head or tip) 
%     \item $\chain[:i], \chain[-i:]$: the first (and last resp.) $i$ blocks of a chain $\chain$
%     \item $\ledger \prec \ledger'$: the prefix operation of ledgers, \ie $\ledger$ is a prefix of $\ledger'$
%     \item $\block_{\tx}$: the block $\block$ in which a transaction $\tx$ is published
% \end{itemize}

\emph{Network.}
The execution proceeds in (a polynomial number of) rounds of size $\Delta$.
In each round $\round$, each party is activated and performs various operations based
on its algorithm. A message honestly produced at round $\round$ is delivered to
all other parties at the beginning of round $\round{+}1$, \ie we assume
a \emph{synchronous network}.
Finally, we assume a diffuse functionality, \ie a gossip protocol which
allows the parties to share messages without the need of a fully connected graph.

\emph{Parties.}
The set of $\totalParties$ parties $\partySet$ remains fixed for the duration of
the execution. Each party $\party$ holds a (fixed) percentage of power
$\power_\party$ which enables its participation in the distributed ledger protocol.
For example, if a Proof-of-Work (PoW) mechanism is used, the power corresponds
to hashing power, whereas in Proof-of-Stake (PoS) it corresponds to stake.

% \ifsubmission
% Our work also makes use of two known primitives, Verifiable Delay Functions
% (VDFs) and commitments (cf. \cref{sec:vdf-definition} and \cref{sec:commitment}).
% \fi

\subsection{Cryptographic Model}

In the cryptographic model, some parties are assumed honest, \ie follow the
protocol, whereas the rest are Byzantine, \ie may arbitrarily deviate from
the protocol specification. Note that incentives do not play a role in the
analysis under this model; they are instead considered in the game theoretic
analysis.

\emph{Adversary.}
The adversary $\adversary$ is an ITM which, upon activation, may ``corrupt'' a number of
parties by sending a relevant message to the controller after $\env$ instructs it to do so.
As a result, when a corrupted party is supposed to be activated, $\adversary$ is activated
instead.
$\adversary$ controls $\adversarialParties$ parties, out of the total $\totalParties$,
and its power is the sum of the corrupted parties' power.
% Therefore, the power of the adversary $\power_\adversary$ is equal to the sum of all corrupted parties.
Furthermore, $\adversary$ is ``adaptive'', \ie corrupts parties on the fly,
and ``rushing'', \ie decides its strategy and acts after observing the other
parties' messages.

\subsection{Game Theoretic Model}

In the game-theoretic model, we assume that all parties are rational.
Each party $\party$ employs a strategy $\strategy$, that is a set of rules
and actions that the party makes depending on its input. In other words,
$\strategy$ defines the algorithm that $\party$ runs and its
actions within the distributed protocol. 

Each party has a well-defined utility $\utility$ and chooses a strategy $\strategy$
with the goal of maximizing their utility.
The literature of game theoretic analyses of blockchain protocols considers various
types of utilities~\cite{DBLP:journals/corr/abs-1905-08595}:
\begin{inparaenum}[(i)]
    \item absolute rewards;
    \item absolute profit, \ie absolute rewards minus cost;
    \item relative rewards, \ie the percentage of the party's reward among
        all allocated rewards;
    \item relative profit, \ie relative rewards minus (absolute) cost;
    \item deposits and penalties, \eg using a slashing mechanism;
    \item external rewards, \eg bribes.
\end{inparaenum}
Our analysis will consider two utilities. First,~\cref{sec:protocol}
considers absolute rewards that consist of three types of income: fixed
block rewards, transaction fees, and MEV opportunities.
\ifenhanced
Next,~\cref{sec:protocol-enhanced} considers absolute profit, where the income
is computed as before and cost corresponds to Proof-of-Work computations.
\fi

We will use two game theoretic tools, Nash equilibrium and compliance.
% \ifsubmission
% Due to space constraints, we refer to Appendices~\ref{sec:nash-equilibrium-definition}
% and~\ref{sec:compliance-definition} for a detailed description of these
% primitives.
% \fi

% \ifsubmission
% \subsection{Approximate Nash Equilibrium}\label{sec:nash-equilibrium-definition}
% \else
\subsubsection{Approximate Nash Equilibrium.}\label{sec:nash-equilibrium-definition}
% \fi
We assume a set of parties $\party_1, \dots, \party_n$. Each party $\party_i$
employs a strategy $\strategy_i$, that defines a set of rules and actions that
the party follows depending on how the execution progresses. A strategy profile
$\strategyProfile$ is a vector of all parties' strategies. Intuitively,
$\strategyProfile$ it is an $\epsilon$-Nash equilibrium if no party can increase
its utility more than $\epsilon$ by unilaterally deviating from the profile
(Definition~\ref{def:nash-equilibrium}).

\begin{definition}[$\epsilon$-Nash equilibrium]\label{def:nash-equilibrium}
    Let:
    \begin{itemize}
        \item $\epsilon$ be a non-negative real number;
        \item $\partySet$ be the set of all parties;
        \item $\strategySet$ be the set of strategies that a party may employ;
        \item $\strategyProfile = \langle \strategy_i, \strategy_{-i} \rangle$
            be a strategy profile, where $\strategy_i$ is the strategy followed by $\party_i$
            and $\strategy_{-i}$ are the strategies employed by all parties except $\party_i$;
        \item $\utility_i(\strategyProfile)$ be the utility of party $\party_i$ under a strategy
            profile $\strategyProfile$.
    \end{itemize}
    $\strategyProfile$ is an $\epsilon$-Nash equilibrium \wrt a utility vector
    $\bar{\utility} = \langle \utility_1, \dots, \utility_{|\partySet|} \rangle$ if:\\
    $\forall \party_i \in \partySet \; \forall \strategy'_i \in \strategySet:
    \utility_i(\langle \strategy_i, \strategy_{-i} \rangle) \geq \utility_i(\langle \strategy'_i, \strategy_{-i} \rangle) - \epsilon$.
\end{definition}

% \ifsubmission
% \subsection{Compliance}\label{sec:compliance-definition}
% \else
\subsubsection{Compliance}\label{sec:compliance-definition}
% \fi
Compliance~\cite{DBLP:conf/aft/KarakostasK022} builds on
the notion of an infraction predicate $\infractionPredicate$. Such predicate
abstracts a specific deviant behavior that the analysis aims to capture.
Intuitively, a protocol is compliant \wrt a specific behavior if, starting from
the setting where all parties follow the protocol, no party will be incentivized
perform the behavior under question. For example, if a protocol is compliant
\wrt to abstaining, no party will be incentivized to abstain, even if the
protocol is not an equilibrium due to the existence of other, profitable deviations.

More precisely, a strategy $\strategy$ is $\infractionPredicate$-compliant if, for
every execution when a party follows $\strategy$, it holds that
$\infractionPredicate=0$. Using this notion, a protocol $\proto$ can be defined
as $(\epsilon, \infractionPredicate)$-compliant. Specifically, consider the
profile $\strategyProfile$ where all parties follow $\proto$. Starting from
$\strategyProfile$, a new profile $\strategyProfile'$ is reachable if a party
can increase its utility by $\epsilon$ when deviating from $\strategyProfile$.
Henceforth, a set of strategies is formed, which includes all strategies that are
reachable from $\strategyProfile$ via paths of such stepwise deviations. In this
setting, $\proto$ is compliant \wrt $\infractionPredicate$, if this set contains
only profiles consisting of $\infractionPredicate$-compliant strategies. In
other words, if a protocol is compliant, parties may deviate from it, but only
in ways that do not violate $\infractionPredicate$.
% \za{needs some intuition, hard to follow}

\subsection{Distributed Ledger}\label{sec:ledger-definition}

A distributed ledger protocol $\proto$ should guarantee two fundamental
properties: safety (Definition~\ref{def:safety}) and liveness
(Definition~\ref{def:liveness}). Here, an ``honest'' party is a party that
follows $\proto$ and $\ledger$ denotes a ledger of transactions, with
$\ledger_{\party, \round}$ being the ledger that a party $\party$ outputs on
round $\round$. Also, $\prec$ denotes the prefix operations, \st $A \prec B$
means that $A$ is a prefix of $B$.

\begin{definition}[Safety]\label{def:safety}
    A distributed ledger protocol is safe if, for any honest parties $\party, \party'$ 
    and any rounds $\round, \round'$, it holds that
    $(\ledger_{\party, \round} \prec \ledger_{\party', \round'}) \cup
    (\ledger_{\party', \round'} \prec \ledger_{\party, \round})$.
\end{definition}

\begin{definition}[Liveness]\label{def:liveness}
    Assume that all honest parties receive a transaction $\tx$ at round
    $\round$. A distributed ledger protocol is live, with parameter
    $\livenessParam$, if for every honest party $\party$ it holds that $\tx \in
    \ledger_{\party, \round+\livenessParam}$. % \za{what does it mean belong? the definitions are not well defined stand alone i think}
\end{definition}

Transaction $\tx$ is \emph{finalized} if, on round $\round$,
$\tx \in \ledger_{\party, \round}$ for some honest party $\party$.

In this work, we consider blockchain-based distributed ledger
protocols, where transactions are published in blocks. Each
block $\block$ contains a reference to exactly one other block, so the
blocks form a directed acyclic graph, the root of which is a common reference
``genesis'' block $\genesis$. The ledger is constructed by picking one of the
graph's branches, that is a chain of blocks that starts from
$\genesis$.\footnote{Usually, the chain decision rule uses some notion of
weight, \st the longest or ``heaviest'' chain is chosen. Our work only assumes a
deterministic algorithm that picks a single chain, without any restrictions on
how this algorithm operates.}

We also require that the considered protocols have an important
``proportionality'' property. Specifically, on each round, the expectation that
a party $\party$ produces a block, which extends the chain of any party that
follows $\proto$, is at most $\power_\party + \negl$.\footnote{$\negl$ is a
negligible function in the security parameter $\secparam$. A function $f:
\mathbb{N} \rightarrow \mathbb{R}$ is negligible if for every $c \in \mathbb{N}$
exists $\secparam_c \in \mathbb{N}$, \st $\forall \secparam > \secparam_c :
|f(\secparam)| < \secparam^{-c}$.}
This property is approximately\footnote{Typically, the random variable
satisfies the Chernoff concentration bounds. We use this stricter
definition for ease of analysis, without affecting the results qualitatively.}
present in ledger protocols that rely on
hardware Sybil resilience assumptions, such as
PoW~\cite{dwork1992pricing,jakobsson1999proofs} or
Proof-of-Space~\cite{dziembowski2015proofs}, as well as -- to some extent -- PoS
protocols.\footnote{Many PoS protocols are
predictable~\cite{bagaria2022proof}, \st they fix the near-future leader schedule 
ahead of time. Although in the long term, the probability
that any party can produce a block at a given round depends only on its power,
in the short term a party can predict with certainty the round when they will be
able to produce a block. This is a distinction from typical PoW-based protocols,
where no party can predict who will create a block in the following round with
a probability higher than their power.}

\subsection{Verifiable Delay Functions}\label{sec:vdf-definition}

A Verifiable Delay Function (VDF)~\cite{boneh2018verifiable} is a primitive that consists of a
triple of algorithms $\langle \setup, \eval, \verify \rangle$.
$\setup(\secparam, \vdfParam)$ takes a security parameter $\secparam$ 
and a delay parameter $\vdfParam$ and
outputs the VDFs public parameters $\msf{pp}$. $\eval(\msf{pp}, \payload)$ takes
an input $x$ from the VDF's domain and outputs a value $\vdfVal$ in the
function's range and, optionally, a short proof $\vdfProof$.\footnote{The short
proof is not a necessary element of the VDF, but it is used in practice to
speed-up the verification process.}
Finally, $\verify(\msf{pp}, x, \vdfVal, \vdfProof)$\footnote{For notation
simplicity we will omit $\msf{pp}$ from $\eval$ and $\verify$ in the rest of the
paper.} verifies that $\vdfVal$ and $\vdfProof$ are correct VDF output and proof
resp.\ on $x$.

A VDF should satisfy the following informal properties:
\begin{itemize}
    \item \emph{sequential}: honest parties can compute 
        $(\vdfVal, \vdfProof) \leftarrow \eval(x)$ in $\vdfParam$ sequential steps,
        while no adversary with a polynomial number of 
        parallel processors can distinguish $\vdfVal$ from random in significantly 
        fewer steps than $\vdfParam$.\footnote{Typically, ``significantly fewer''
        means that a proof of correctness for computation of length
        $t$ is computable in parallel to the computation with
        $\text{polylog}(t)$ processors~\cite{boneh2018verifiable}.}
    \item \emph{efficiently verifiable}: $\verify$ should be as fast as possible, 
        with an upper bound of $O(\text{polylog}(\vdfParam))$.
    \item \emph{unique}: for all inputs $x$, it is difficult to find $\vdfVal$ \st
        $\verify(x, \vdfVal, \vdfProof) = \text{True}$ and $\vdfVal \neq \eval(x)$.
\end{itemize}

\remark{Both the distributed ledger protocol and the VDF progress in sequential
time steps. Specifically, the ledger protocol progresses in rounds, whereas the
VDF's sequential step is (implicitly) a PRAM computation. Notably, these two
types of time steps are qualitatively different, since a round is orders of
magnitude larger than a PRAM computation. To enable meaningful comparisons (\eg
in \cref{thm:input-causality}), in the rest of this paper we assume a function
that transforms PRAM computations to ledger rounds. Therefore, $\vdfParam$
denotes the transformation of the VDF parameter (expressed in PRAM
computations) to ledger rounds.}

\ifsubmission
\subsection{Commitment Schemes}\label{sec:commitment}
\else
\subsubsection{Commitment Schemes}\label{sec:commitment}
\fi
A commitment scheme consists of two algorithms,
$\commit: \{ 0, 1 \}^\star \rightarrow \{ 0, 1 \}^\secparam$
and $\reveal: \{ 0, 1 \}^\secparam \times \{ 0, 1 \}^\star \rightarrow \{ 0, 1 \}$.
Intuitively, a commitment scheme enables a party to commit to a chosen value
and reveal it later, while keeping it hidden (hiding property) and not being
able to repudiate it (binding property). Note that, in order for the properties
to hold, the committed value should be large and random enough, so typically
the plain value is committed along with a large enough random nonce.  Briefly,
a cryptographic commitment scheme should satisfy the following properties:
\begin{itemize}
    \item \emph{Binding}: For all PPT algorithms, it should be infeasible to
        output $x \neq x'$ and $\msf{nonce}, \msf{nonce}'$, where
        $|\msf{nonce}| = |\msf{nonce}'| = 2^\secparam$, such that
        $\commit(\langle x, \msf{nonce} \rangle) = \commit(\langle x', \msf{nonce}' \rangle)$.
    \item \emph{Hiding}: Let $U_\secparam$ be the uniform distribution over the
        $2^\secparam$ opening values for security parameter $\secparam$. It
        should hold that, for all $x \neq x'$ the probability ensembles
        $\{ \commit(\langle x, U_\secparam \rangle) \}$ and $\{ \commit(\langle x', U_\secparam \rangle) \}$
        are computationally indistinguishable.
\end{itemize}

\section{MEV Protection via VDFs}\label{sec:protocol}

% \subsection{VDF-based Transformation}

Our transformation, that makes a distributed ledger protocol MEV resilient,
relies on a simple intuition. By requiring every transaction to be created
over a certain number of rounds, we enforce a delay on an MEV
attacker, between the moment they receive a transaction with an MEV
opportunity and the moment when they can publish a front-running transaction of
their own. If the required amount of rounds is high enough (more than
the ledger's liveness parameter), the honest transaction is finalized before
an adversarial transaction is created.

More precisely, we assume:
\begin{inparaenum}[(i)]
    \item a distributed ledger protocol $\proto$ (cf.~\cref{sec:ledger-definition});
    \item a $\vdf{=}\langle \setup, \eval, \verify \rangle$ with delay parameter
        $\vdfParam$ (cf.~\cref{sec:vdf-definition});
    \item a commitment scheme $\msf{CS} = \langle \commit, \reveal \rangle$.
\end{inparaenum}

As is standard in distributed ledger literature, $\proto$ defines a validity
predicate $\validate$ which, given a ledger and a transaction, outputs
$\text{True}$ if the transaction is allowed to be appended to the ledger. In
essence, this predicate implements the logic of the ledger's application, for
example preventing double-spending or ensuring no new assets are created
incorrectly.

The first step of our mechanism is to transform the transactions in order
to enforce a verifiable delay. Specifically, a transformed transaction
is a tuple
$
\tx = \langle \payload, (\vdfVal, \vdfProof) \rangle
$,
where $\payload$ is the transaction submitted to $\proto$ and $\vdfProof$ is the
proof output by the employed VDF.

Second, our mechanism defines a new validity predicate $\validate_\vdf$, which is a
transformation of $\validate$ and is parameterized by the VDF.
A transaction $\tx = \langle \payload, (\vdfVal, \vdfProof) \rangle$ is now
valid \wrt a ledger $\ledger$ if:
\begin{equation}\label{eq:validity-predicate}
    \begin{split}
        \validate_{\vdf, \msf{CS}}(\tx, \ledger) = \\
        \validate(\payload, \ledger) \land \\
        \verify(C, \vdfVal, \vdfProof) \land \reveal(C, \payload) = 1
    \end{split}
\end{equation}

\ifsubmission
\else
Note that in all of our work we assume that a transaction's payload $\payload$
is an element in the domain of $\commit$.
\fi

In short, our transformation adds one extra requirement for a transaction to be
valid: a VDF should have been evaluated on its commitment.\footnote{We use the
transaction's commitment, instead of the payload itself, to enable outsourcing
of the VDF computation, as will be discussed in
\cref{sec:implementation}. For this we assume that $\payload$ is large
enough s.t. no party can feasibly predict it using brute force.}

\remark{The sequentiality property of VDFs is defined over random domain
elements. In our case, the domain element is the output of the commitment
function. Therefore, for the property to translate in our setting, we assume
that the commitment function is modeled as a random oracle.}

\subsection*{Cryptographic Analysis}

To analyze the transformation in the Byzantine setting, the first step is to
define MEV resilience. Intuitively, a protocol is MEV resilient if an adversary
cannot front-run an honest transaction. In other words, for a transaction $\tx$
that is broadcast on round $\round$, an adversary cannot create a transaction
$\tx'$ which depends on $\tx$ and which is published ahead of $\tx$ in the final
ledger.

Fortunately, this notion already exists in the literature as ``input
causality''~\cite{reiter1994securely,cachin2001secure}. Briefly, input causality
ensures that, given a payload message $m$ sent by an honest client but not yet
delivered, the adversary cannot ask the system to deliver a message that depends
in a meaningful way on $m$~\cite{cachin2001secure}.

\cref{def:input-causality} adapts input causality in our setting. Here, a
``payload message'' is a transaction $\tx$. A message is ``sent'' when the
honest client broadcasts $\tx$ and it is ``delivered'' when $\tx$ is finalized
on the ledger. In this context, input causality is guaranteed if an adversary
cannot create a transaction $\tx'$, which passes the validity checks and which
meaningfully relies on $\tx$, before $\tx$ is finalized.

\begin{definition}[Blockchain Input Causality]\label{def:input-causality}
    Let transaction $\tx$ which is received by honest parties on round $\round$.
    No valid adversarial transaction $\tx'$, which meaningfully relies on $\tx$,
    can be received by an honest party $\party$ on round $\round'$ where $\tx
    \not \in \ledger_{\party, \round'}$, \ie before $\tx$ is finalized.
\end{definition}

Because the change we introduce is on the application level, the new ledger inherits
the security guarantees of the ledger protocol on which it is built.

\cref{thm:input-causality} %\ifsubmission (proof in~\cref{proof:vdf-input-causality}) \fi
shows that MEV resilience can be achieved if $\vdfParam \geq \livenessParam$. 
Intuitively, an honest transaction created on round $\round$ is finalized at
round $\round{+}\livenessParam$ (due to the ledger's liveness guarantee), which
is the earliest round on which an adversary can create a valid front-running
transaction (due to the imposed VDF delay).

\begin{theorem}[Blockchain Input Causality]\label{thm:input-causality}
    Assume:
    \begin{itemize}
        \item A distributed ledger protocol $\proto$
            (cf.~\cref{sec:ledger-definition}), which guarantees liveness with
            parameter $\livenessParam$ with negligible error probability.
        \item A VDF function $\vdf$ with delay parameter $\vdfParam$
            (cf.~\cref{sec:vdf-definition}). 
        \item A binding commitment scheme $\msf{CS} = \langle \commit, \reveal \rangle$.
    \end{itemize}
    The ledger protocol $\vdfProto$, which is constructed by applying the
    validity predicate transformation $\validate_{\vdf, \msf{CS}}$ (cf.
    Eq.~\eqref{eq:validity-predicate}) on $\proto$ with VDF parameter $\vdfParam
    > \livenessParam$, guarantees input causality (\cref{def:input-causality})
    with negligible error probability.
\end{theorem}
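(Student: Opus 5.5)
The proof will go by reduction to the security of the three primitives, with a timing argument that pits the VDF delay against the liveness window. Fix an honest transaction $\tx = \langle \payload, (\vdfVal, \vdfProof) \rangle$ that all honest parties receive on round $\round$. Before that round, $\payload$ is known only to its honest creator. Broadcasting the commitment is harmless: it is a random-oracle output, and by the footnote assumption $\payload$ cannot be guessed by brute force. By liveness of $\proto$, with all but negligible probability $\tx \in \ledger_{\party, \round+\livenessParam}$ for every honest $\party$. The transformed ledger $\vdfProto$ inherits this guarantee, because the honest $\tx$ satisfies $\validate_{\vdf,\msf{CS}}$ whenever $\payload$ satisfies $\validate$. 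It therefore suffices to show that no valid adversarial $\tx'$ that meaningfully relies on $\tx$ can reach an honest party at any round $\round' \le \round+\livenessParam$, since only before $\round+\livenessParam$ can $\tx \notin \ledger_{\party,\round'}$ hold.

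Next I would pin down ``meaningfully relies on $\tx$''. It means that the payload $\payload'$ of $\tx'$ is a function of $\payload$ (or of $\tx$) which the adversary could not have produced without seeing $\tx$. Since $\tx'$ is valid, it carries $(\vdfVal', \vdfProof')$ with $\verify(C', \vdfVal', \vdfProof')$ true and $\reveal(C', \payload')=1$. By binding, except with negligible probability, $C'$ is the unique commitment the adversary can open to $\payload'$, so $C'$ is itself determined only after round $\round$. Modelling $\commit$ as a random oracle, $C'$ is a fresh, uniformly random domain element that first becomes available to the adversary at round $\round$ at the earliest.

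The core step then applies sequentiality together with uniqueness. Producing a correct $\vdfVal'$ on a fresh random input needs about $\vdfParam$ sequential steps, which is $\vdfParam$ rounds under the paper's PRAM-to-round conversion. A polynomially parallel adversary that outputs $\vdfVal'$ earlier would distinguish $\eval(C')$ from random; uniqueness rules out passing $\verify$ with any value other than $\eval(C')$. So the earliest round at which $\tx'$ can exist is $\round+\vdfParam$, and with one more round of network delivery it reaches an honest party no earlier than that. Because $\vdfParam > \livenessParam$, this round is strictly after $\round+\livenessParam$, when $\tx$ is already in every honest ledger, which contradicts $\tx \notin \ledger_{\party,\round'}$. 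A union bound over the liveness, binding, sequentiality and uniqueness failure events gives negligible total error.

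I expect the main obstacle to be the middle step: formally connecting ``meaningful reliance'' to the claim that the adversary cannot begin its VDF evaluation before round $\round$. The worry is an adversary that precomputes VDF outputs on commitments of many candidate payloads. I would rule this out by appealing to the unpredictability of $\payload$, hence of any dependent $\payload'$, and to the random-oracle modelling of $\commit$. This turns any successful early $\tx'$ into either a binding break, a correct guess of $\payload$, or an early VDF evaluation on a fresh random input, which the sequentiality game prohibits.
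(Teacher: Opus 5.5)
Your proposal is correct and follows essentially the paper's route. Liveness of $\proto$ carries over to $\vdfProto$ because only transaction validity changes. A transaction meaningfully relying on $\tx$ cannot have its binding commitment, and hence its VDF input, fixed before round $\round$. Sequentiality then pushes its validity to round $\round+\vdfParam > \round+\livenessParam$. Your explicit use of VDF uniqueness and of the random-oracle commitment just fills in what the paper leaves implicit.

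Two small corrections:
\begin{itemize}
\item Binding says $C'$ opens to at most one payload. It does not say that $\payload'$ has a unique commitment; there is one commitment per nonce.
\item The unpredictability of $\payload'$ must come from your definition of meaningful reliance, not from the unpredictability of $\payload$. A low-entropy function of a high-entropy $\payload$ can be guessed and precomputed, which is exactly the predictable-MEV case this theorem does not cover.
\end{itemize}
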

% \ifsubmission
% \else
\begin{proof}
    We will prove violating $\vdfProto$'s input causality implies
    violating $\proto$'s liveness.
    
    Let a transaction $\tx$ that is received on round $\round$ by honest parties
    that run $\vdfProto$. Any transaction $\tx'$ which meaningfully relies on
    $\tx$ is created on round $\round$ at the earliest, that is when $\tx$ is
    broadcast to all parties. Due to the second requirement of
    Eq.~\eqref{eq:validity-predicate}, $\tx'$ is valid on round $\round +
    \vdfParam$ at the earliest, which holds due to the sequentiality property of
    the VDF. If input causality is violated, then $\tx$ is not finalized until
    after round $\round + \vdfParam > \round + \livenessParam$, so liveness of
    $\vdfProto$ is violated.

    However, observe that the probability that a party creates a block at any
    round is the same in both $\proto$ and $\vdfProto$. This holds because this
    probability depends only on the party's power
    (cf.~\cref{sec:ledger-definition}) and not on the existence of a
    transaction. Therefore, liveness of a transaction $\tx$ does not depend on
    the existence or validity of another transaction $\tx'$. Since transaction
    validity is the only difference between $\proto$ and $\vdfProto$, if an
    adversary $\adversary$ can break liveness of $\vdfProto$, then $\adversary$
    can also break liveness of $\proto$.

    Finally, the commitment scheme's binding property guarantees that the
    VDF evaluation of adversarial transactions is done \wrt the
    adversarial payload.
    \qed
\end{proof}

% \fi

We note that the hiding property of the commitment scheme is not needed for the
security proof. Rather, it is useful in practice, in order to enable
outsourcing of the VDF computation to untrusted third parties (cf.
\cref{sec:implementation}).

\subsection*{Game Theoretic Analysis}

\emph{Utility.}
We assume that each party's utility is the absolute amount of income that is the sum of 
three types of rewards:
\begin{inparaenum}[(i)]
    \item fixed block rewards;
    \item transaction fees;
    \item MEV opportunities.
\end{inparaenum}
Note that we do not consider costs here.

\emph{Executions.}
In every execution, we assume that each transaction defines the following two
non-negative types of utility for protocol participants:
\begin{inparaenum}[(i)]
    \item an amount of fees $\txFee$;
    \item a MEV opportunity $\mevOpportunity$.
\end{inparaenum}
We define the following two types of executions:
\begin{inparaenum}[(i)]
    \item $\execution_\msf{+MEV}$: every transaction yields a MEV opportunity
        $0 \leq \mevOpportunity < \infty$;
    \item $\execution_\msf{-MEV}$: every transaction yields a MEV opportunity
        $\mevOpportunity = 0$.
\end{inparaenum}
Intuitively, in every execution $\execution_{+\text{MEV}}$ transactions might
offer MEV opportunities, whereas in $\execution_{-\text{MEV}}$ there exist no
MEV opportunities.

\emph{Reward distribution.}
We consider per-block reward mechanisms, where for each block in the final
ledger, the block's creator gets all three types of rewards associated with the
block, \eg Bitcoin's reward mechanism.

Theorem~\ref{thm:protocol-equilibrium} % \ifsubmission (proof in~\cref{proof:vdf-positive}) \fi
shows that, if a protocol is an equilibrium
\emph{in the absence} of MEV opportunities, then our VDF-based transformation
makes the (transformed) protocol an equilibrium when MEV opportunities do exist.
This is a very useful result for protocol designers, as it describes a clear way
for securing ledgers against MEV attacks in the rational
setting. That is, even if the vanilla protocol $\proto$ is not an equilibrium if
MEV opportunities exist, it can be directly transformed to a protocol that is an
equilibrium by using our proposal.

\begin{theorem}[Positive equilibrium]\label{thm:protocol-equilibrium}
    Assume:
        \begin{itemize}
            \item No party controls more than $\livenessThreshold$ power.
            \item A distributed ledger protocol $\proto$
                (cf.~\cref{sec:ledger-definition}), which guarantees liveness
                with parameter $\livenessParam$ with negligible error
                probability, if all parties that follow $\proto$ control $1 -
                \livenessThreshold$ power.
            \item A VDF function $\vdf$ with delay parameter $\vdfParam$
                (cf.~\cref{sec:vdf-definition}). 
            \item A binding commitment scheme $\msf{CS} = \langle \commit, \reveal \rangle$.
            \item A ledger protocol $\vdfProto$, constructed by applying the
                validity predicate transformation $\validate_{\vdf, \msf{CS}}$ (cf.
                Eq.~\eqref{eq:validity-predicate}) on $\proto$, with VDF parameter
                $\vdfParam > \livenessParam$.
        \end{itemize}
    If $\proto$ is an $\epsilon$-Nash equilibrium for every execution
    $\execution_\msf{-MEV}$, then $\vdfProto$ is an $\epsilon'$-Nash
    equilibrium for every execution $\execution_\msf{+MEV}$, where $\epsilon' =
    \epsilon + \negl$.
\end{theorem}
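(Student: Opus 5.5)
Our approach is a reduction: we show that any profitable unilateral deviation from $\vdfProto$ in an execution $\execution_\msf{+MEV}$ yields, up to a negligible loss, an equally profitable deviation from $\proto$ in a matching execution $\execution_\msf{-MEV}$. Since $\proto$ is an $\epsilon$-Nash equilibrium there, the deviation gains at most $\epsilon + \negl$.

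Fix a party $\party$ and a deviating strategy $\strategy'$ against the profile where all other parties follow $\vdfProto$. By assumption $\party$ controls at most $\livenessThreshold$ power, so the honest parties control at least $1-\livenessThreshold$. Hence $\proto$, and therefore $\vdfProto$, is live with parameter $\livenessParam$ except with negligible probability. For $\vdfProto$ this follows by the argument in the proof of \cref{thm:input-causality}: block production probabilities depend only on power, and the two protocols differ only in the validity predicate.

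We split $\party$'s utility into block rewards, fees, and MEV.

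The first and central step is to show that the MEV component is negligible in expectation. Any transaction by $\party$ that meaningfully relies on an honest transaction $\tx$ received at round $\round$ cannot be valid before round $\round+\vdfParam$. This uses VDF sequentiality, the random-oracle modelling of $\commit$, and binding, which stops $\party$ from reusing an earlier VDF evaluation for a different payload. By liveness and $\vdfParam>\livenessParam$, $\tx$ is already in every honest ledger by round $\round+\livenessParam$. This is exactly \cref{thm:input-causality}. Consequently, $\party$ cannot place a dependent transaction ahead of $\tx$ in the final ledger, so it cannot front-run or sandwich $\tx$.

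For this to imply that the MEV is unclaimable, we will read ``MEV opportunity'' as value that can only be extracted by ordering a dependent transaction before the victim. The failure event has negligible probability and the utility is bounded, since $\mevOpportunity<\infty$ and there are polynomially many rounds and transactions. So the expected MEV income under $\strategy'$ is at most $\negl$, and under honest play it is zero as well.

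The second step handles the remaining utility, which consists of block rewards and fees. We map the execution of $\vdfProto$ to an execution $\execution_\msf{-MEV}$ of $\proto$ with the same payloads and fees but $\mevOpportunity=0$. We then simulate $\strategy'$ as a strategy for $\proto$: it computes VDFs locally, or drops that step, and otherwise acts identically. The block-creation distribution is the same in both protocols, and the reward mechanism is per-block. So the expected block rewards and fees of $\strategy'$ in $\vdfProto$ equal those of the simulated strategy in $\proto$, and likewise for the honest strategy.

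Combining the two steps,
$\utility(\strategy') \le \utility_{\proto}(\tilde\strategy') + \negl \le \utility_{\proto}(\proto) + \epsilon + \negl = \utility(\vdfProto) + \epsilon + \negl$.

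The main obstacle is the first step: turning input causality, a statement about Byzantine adversaries, into a bound on rational MEV gains. This requires arguing that every MEV extraction needs a dependent transaction finalized before the victim, and that a single deviator with at most $\livenessThreshold$ power cannot delay $\tx$'s finalization. The latter is where the power bound enters. A secondary subtlety is making sure the simulated strategy in $\proto$ is well defined; in particular, VDF-related actions must not change which blocks are valid except through transactions that are invalid anyway.
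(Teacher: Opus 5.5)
Your proposal is correct and follows the paper's proof. It uses the same split into block rewards, fees and MEV. MEV is bounded by $\negl$ because claiming it would require keeping $\tx$ out of the ledger for $\vdfParam > \livenessParam$ rounds, which a party with at most $\livenessThreshold$ power cannot do. The remaining income is reduced to $\proto$'s $\epsilon$-equilibrium, since the transformation changes only transaction validity and not block production.

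Your explicit simulation of $\strategy'$ in a matched $\execution_{\msf{-MEV}}$ run treats block rewards and fees together. This is a slightly more careful version of the paper's component-wise argument, and it yields the stated $\epsilon + \negl$ bound directly instead of bounding each component by $\epsilon$ separately.
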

% \ifsubmission
% \else
\begin{proof}
    Let us assume a party $\party$ that deviates from the protocol $\vdfProto$, while
    all other parties follow $\vdfProto$. 

    First, observe that the only difference with $\vdfProto$ is the transaction
    validation predicate. Since a transaction's validity does not affect the
    expected amount of blocks that a party can produce, which only depends on
    its power (cf.~\cref{sec:ledger-definition}), $\party$ cannot increase their
    expected fixed block rewards by more than $\epsilon$ when deviating from
    $\vdfProto$.

    Second, in $\vdfProto$, $\party$ gets the fees of a transaction $\tx$ only if
    $\tx$ is included in a block of $\party$. However, similarly to the first argument,
    a transaction does not affect the parties' expected block production, so
    $\party$ cannot increase their expected transaction fee utility by more than
    $\epsilon$ when deviating from $\vdfProto$.

    Third, $\party$ can claim the MEV opportunity of $\tx$ only by censoring
    $\tx$ for at least $\vdfParam > \livenessParam$ rounds. However, by
    assumption this only happens with negligible probability, since $\party$
    controls less than $\livenessThreshold$ power. Therefore, $\party$ cannot
    increase their expected MEV utility by more than $\negl$ when deviating from
    $\vdfProto$.
\qed
\end{proof}

% \fi

\cref{thm:negative-equilibrium} % \ifsubmission (proof in~\cref{proof:vdf-negative}) \fi
presents another equilibrium where
the MEV opportunities are attacked, as opposed to
before. Intuitively, if all parties censor a transaction long
enough in order to claim its MEV opportunity, then no party has an incentive to
deviate and try to publish said transaction.

\begin{theorem}[Negative equilibrium]\label{thm:negative-equilibrium}
    Assume:
        \begin{itemize}
            \item No party controls more than $1 - \livenessThreshold$ power.
            \item A distributed ledger protocol $\proto$
                (cf.~\cref{sec:ledger-definition}), which guarantees liveness
                with parameter $\livenessParam$ with negligible error
                probability, if all parties that follow $\proto$ control $1 -
                \livenessThreshold$ power.
            \item A VDF function $\vdf$ with delay parameter $\vdfParam$
                (cf.~\cref{sec:vdf-definition}).
            \item A binding commitment scheme $\msf{CS} = \langle \commit, \reveal \rangle$.
            \item A ledger protocol $\vdfProto$, constructed by applying the
                validity predicate transformation $\validate_{\vdf, \msf{CS}}$ (cf.
                Eq.~\eqref{eq:validity-predicate}) on $\proto$, with VDF parameter
                $\vdfParam > \livenessParam$.
        \end{itemize}
    Let $\tx$ be a transaction that yields a MEV opportunity
    $\mevOpportunity_\tx > \epsilon$. We define the strategy
    $\strategy_\text{MEV}$, parameterized by $\tx$, where a party:
    (i) censors $\tx$ for $\vdfParam$ rounds;
    (ii) creates a transaction $\tx'$ that claims the MEV opportunity of $\tx$;
    (iii) tries to claim the MEV opportunity of $\tx$ when possible,
    that is on every round after $\tx'$ becomes valid, and while
    $\tx$ is not published in the ledger, the party tries to publish
    $\tx'$ before $\tx$;
    (iv) for every other operation, the party follows $\vdfProto$.
    
    If $\proto$ is an $\epsilon$-Nash equilibrium for every execution
    $\execution_\msf{-MEV}$, then $\strategy_\text{MEV}$ is an
    $\epsilon$-Nash equilibrium for every execution $\execution_\msf{+MEV}$.
\end{theorem}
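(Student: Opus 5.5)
We argue that the profile in which every party follows $\strategy_\text{MEV}$ is an $\epsilon$-Nash equilibrium. Fix a party $\party$ that unilaterally deviates to some $\strategy'$, while all other parties follow $\strategy_\text{MEV}$, and compare $\party$'s utility under the two profiles. As in the proof of \cref{thm:protocol-equilibrium}, we split the utility into fixed block rewards, transaction fees, and MEV opportunities, and bound the gain from each component separately.

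\emph{Block rewards and fees.} Outside the handling of $\tx$ and $\tx'$, $\strategy_\text{MEV}$ coincides with $\vdfProto$ by item (iv). The only difference between $\vdfProto$ and $\proto$ is the validity predicate, and by proportionality (cf.~\cref{sec:ledger-definition}) the expected number of blocks $\party$ produces depends only on its power, not on which transactions exist or are valid. So we reduce to the hypothesis that $\proto$ is an $\epsilon$-Nash equilibrium in every $\execution_\msf{-MEV}$. Consider the execution with the MEV values set to zero. There, $\party$'s block reward and fee income under any deviation differs from its income under $\strategy_\text{MEV}$ by at most $\epsilon$. Censoring a single transaction for $\vdfParam$ rounds only moves its fee $\txFee$ from one block to a later one, and does not change $\party$'s expected share of blocks. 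This part is routine.

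\emph{MEV component.} This is the step we expect to be the main obstacle. The key observation is that, since every other party censors $\tx$ for $\vdfParam$ rounds, the only blocks that can include $\tx$ before round $\round+\vdfParam$ are $\party$'s own. We consider the two ways $\party$ can deviate.

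\begin{itemize}
\item \textbf{Publishing $\tx$ early.} $\party$ gains at most the fee $\txFee$, which it would collect anyway in expectation through its block share. It also forfeits any chance at $\mevOpportunity_\tx$, since once $\tx$ is in the ledger the opportunity disappears. This deviation therefore does not increase utility.
\item \textbf{Not contending for the MEV.} If $\party$ does not create $\tx'$, or does not try to publish it, it only loses its chance of claiming $\mevOpportunity_\tx$. Whichever block creator first extends the chain after $\tx'$ becomes valid claims the opportunity, so $\party$'s expected share is proportional to $\power_\party$ and only decreases if it abstains.
\end{itemize}

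We must also handle the case where $\party$'s power alone suffices to include $\tx$. Here we use the assumption that no party controls more than $1-\livenessThreshold$ power. It implies that the remaining parties together hold at least $\livenessThreshold$ power, so their joint censorship persists with non-negligible probability. By \cref{thm:input-causality} and the VDF's sequentiality, no party can obtain a valid $\tx'$ before round $\round+\vdfParam$, so nobody, including $\party$, can accelerate the MEV claim.

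\emph{Conclusion.} Summing the three bounds, $\party$'s total gain from any unilateral deviation is at most $\epsilon$. In particular, the MEV term contributes no positive gain, because $\mevOpportunity_\tx > \epsilon$ makes abandoning the attack strictly unprofitable. Hence the all-$\strategy_\text{MEV}$ profile is an $\epsilon$-Nash equilibrium for every $\execution_\msf{+MEV}$.
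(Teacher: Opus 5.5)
\textbf{Gap: the deviation in which $\party$ publishes $\tx$ during the censorship window.} You allow $\party$'s own blocks to include $\tx$ before round $\round+\vdfParam$. You then call this deviation harmless because $\party$ would collect the fee ``anyway in expectation through its block share.'' That comparison is wrong. If $\party$ puts $\tx$ in its own block and the block sticks, $\party$ collects the whole $\txFee$. Under the all-$\strategy_\text{MEV}$ profile, it collects $\tx$'s fee (and the MEV) only if it is the block creator that claims the opportunity after the window, which by proportionality happens with probability about $\power_\party$. So, conditional on $\party$ producing a block in the window, the deviation changes its utility by roughly $(1-\power_\party)\txFee - \power_\party\mevOpportunity_\tx$. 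This is positive whenever the fee is large relative to $\party$'s expected MEV share, and the hypothesis $\mevOpportunity_\tx > \epsilon$ does not exclude it.

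Your reduction of the fee component to the $\execution_\msf{-MEV}$ hypothesis does not close this hole either. That hypothesis concerns the profile in which everyone runs $\proto$. It says nothing about a unilateral deviation that exploits everyone else's censorship of $\tx$.

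\textbf{The missing idea}, which is the crux of the paper's argument, is to use the power bound to show that early publication \emph{fails}. Since $\party$ controls at most $1-\livenessThreshold$ power, the remaining parties jointly control at least $\livenessThreshold$ and can therefore violate liveness for $\tx$. So the censoring coalition does not merely omit $\tx$ from its own blocks; it prevents $\tx$ from being finalized. Hence $\party$ cannot get $\tx$ into the ledger within $\vdfParam$ rounds except with negligible probability, and $\tx$'s fee is unattainable during the window. After the window all parties follow $\vdfProto$, which yields the $\epsilon$ bound on fees. The MEV comparison then goes through as in your second bullet: abstaining only lowers $\party$'s roughly $\power_\party$ chance of claiming the opportunity.

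Your text concludes only that the joint censorship persists ``with non-negligible probability.'' That is too weak to bound anything. You need overwhelming probability for $\party$'s gain to stay within $\epsilon$ up to a negligible term.

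\textbf{A smaller point.} Citing \cref{thm:input-causality} here is misplaced: its liveness premise is exactly what fails against the censoring coalition. VDF sequentiality alone gives that no valid $\tx'$ exists before round $\round+\vdfParam$.
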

% \ifsubmission
% \else
\begin{proof}
    We will follow a similar reasoning to Theorem~\ref{thm:protocol-equilibrium}.

    First, observe that trying to MEV-attack $\tx$ does not affect a party's
    expected block production which, by definition
    (cf.~\cref{sec:ledger-definition}), depends only on the party's power.
    Since $\proto$ is an $\epsilon$ equilibrium, no party $\party$
    can increase its utility in terms of fixed block rewards by more than
    $\epsilon$ when deviating from $\strategy_\text{MEV}$.

    Second, by assumption no party $\party$ can unilaterally enforce the
    publishing of $\tx$, since liveness can
    be violated by the rest of the parties who control more than
    $\livenessThreshold$ aggregate power. Therefore, if $\party$ deviates from
    $\strategy_\text{MEV}$ and tries to publish $\tx$ within $\vdfParam$ rounds
    of receiving it, all other parties will successfully censor $\tx$, meaning
    that $\party$ cannot claim the fees of $\tx$ in those rounds unless with
    negligible probability. Therefore, $\party$ can only claim the fees of $\tx$
    after all other parties stop trying to censor it, at which point all parties
    will follow $\vdfProto$, meaning that $\party$ cannot increase its utility
    in terms of transaction fees by more than $\epsilon$.

    Third, assume that $\party$ deviates from $\strategy_\text{MEV}$ by trying
    to publish $\tx$ within $\vdfParam$ rounds of receiving it. 
    If $\party$ succeeds, then its MEV utility \wrt $\tx$ would be $0$. If it
    doesn't succeed, then all other parties will try to
    claim the MEV opportunity after round $\round + \vdfParam$ (where $\round$
    is the round on which they received $\tx$). At this point though, the
    expectation that $\party$ claims the opportunity for themselves is at most
    equal to the same expectation when following $\strategy_\text{MEV}$.
\qed
\end{proof}

% \fi

% \dimitris{Stronger results: assume some parties are honest and all others are rational: is there an equilibrium and is it positive/negative?}

Finally, \cref{thm:censorship-compliance} % \ifsubmission (proof in~\cref{proof:censorship-compliance}) \fi
shows that if a protocol is compliant~\cite{DBLP:conf/aft/KarakostasK022} \wrt
censorship when no MEV opportunities exist, then the VDF-transformed
variant is compliant \wrt blockchain input causality when MEV opportunities do
exist.
We remind that compliance considers the case when starting from the setting
where everyone follows the protocol, parties may end up in arbitrary settings
via unilateral deviations. This result shows that, if censorship is not
performed when starting from the plain protocol under no MEV
opportunities, then blockchain input causality is achieved when starting from
the VDF variant even in the presence of MEV opportunities.

\begin{definition}[Censorship infraction predicate]\label{def:censorship-infraction}
    The predicate $\censorInfractionPredicate$ which captures censorship is 
    defined as follows. Let party $\party$ which, on round $\round$,
    receives a transaction $\tx$ that is valid \wrt the local chain that $\party$ adopts
    at the beginning of round $\round+1$. $\party$ performs censorship in an execution trace
    $\executionTrace$, \st $\censorInfractionPredicate(\executionTrace, \party) = 1$, if
    the block that $\party$ produces on round $\round+1$ does not include $\tx$.
\end{definition}

\begin{theorem}[Censorship compliance]\label{thm:censorship-compliance}
    Assume:
        \begin{itemize}
            \item No party controls more than $\livenessThreshold$ power.
            \item A distributed ledger protocol $\proto$
                (cf.~\cref{sec:ledger-definition}), which guarantees liveness
                with parameter $\livenessParam$ with negligible error
                probability, if all parties that follow $\proto$ control $1 -
                \livenessThreshold$ power.
            \item A VDF function $\vdf$ with delay parameter $\vdfParam$
                (cf.~\cref{sec:vdf-definition}).
            \item A binding commitment scheme $\msf{CS} = \langle \commit, \reveal \rangle$.
            \item A ledger protocol $\vdfProto$, constructed by applying the
                validity predicate transformation $\validate_{\vdf, \msf{CS}}$ (cf.
                Eq.~\eqref{eq:validity-predicate}) on $\proto$, with VDF parameter
                $\vdfParam > \livenessParam$.
        \end{itemize}

    If $\proto$ is $(\epsilon, \censorInfractionPredicate$)-compliant for every
    execution $\execution_\msf{-MEV}$, then $\vdfProto$ is $(\epsilon',
    \censorInfractionPredicate$)-compliant for every execution
    $\execution_\msf{+MEV}$, where $\epsilon' = \epsilon + \negl$.
\end{theorem}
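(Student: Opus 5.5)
The plan follows the template of Theorem~\ref{thm:protocol-equilibrium}: reduce the $\execution_\msf{+MEV}$ setting for $\vdfProto$ to the $\execution_\msf{-MEV}$ setting for $\proto$, and show that MEV adds at most a negligible incentive to any deviation. By the definition of compliance, I need to show two things. Start from the profile $\strategyProfile$ in which all parties follow $\vdfProto$. Every profile reachable via stepwise unilateral deviations that each gain more than $\epsilon'$ must consist only of $\censorInfractionPredicate$-compliant strategies. I would prove this by induction on the length of the deviation path. The inductive invariant is that every profile along the path is also reachable in the $\execution_\msf{-MEV}$ game for $\proto$, up to the VDF wrapping of transactions, with utilities differing by at most $\negl$.

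\textbf{Step 1: the utility decomposition.} Each party's utility splits into block rewards, fees and MEV, exactly as in the proof of Theorem~\ref{thm:protocol-equilibrium}. Block rewards and fees depend only on block production, which by the proportionality property of \cref{sec:ledger-definition} depends only on power, and on which transactions are included. Validity is the only difference between $\proto$ and $\vdfProto$, and honestly created transactions always carry a valid VDF proof. So for any profile, the block-reward and fee components under $\vdfProto$ in $\execution_\msf{+MEV}$ coincide with those of the corresponding profile under $\proto$ in $\execution_\msf{-MEV}$.

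\textbf{Step 2: bound the MEV component by $\negl$.} By the sequentiality of the VDF and the binding property of $\msf{CS}$, as in Theorem~\ref{thm:input-causality}, claiming the MEV of $\tx$ requires $\tx$ to stay unfinalized for more than $\vdfParam > \livenessParam$ rounds. That requires censoring $\tx$ against parties that do not censor it. In $\strategyProfile$ all other parties follow $\vdfProto$, and a single deviator has power below $\livenessThreshold$. Liveness of $\proto$, which $\vdfProto$ inherits, therefore gives only negligible probability of success, so any deviation gains at most $\negl$ in MEV. Hence a deviation that is $\epsilon'$-profitable in $\execution_\msf{+MEV}$ with $\epsilon'=\epsilon+\negl$ is $\epsilon$-profitable in $\execution_\msf{-MEV}$ for $\proto$. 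Compliance of $\proto$ then gives that the reached profile consists of $\censorInfractionPredicate$-compliant strategies.

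\textbf{Step 3: the main obstacle.} The induction beyond the first step is where the difficulty lies. Once some parties have deviated, possibly in non-censoring but otherwise arbitrary ways, the others no longer follow $\vdfProto$ exactly, so the liveness hypothesis may not apply directly. My approach is to use the invariant itself. All profiles on the path are $\censorInfractionPredicate$-compliant, so every party still includes each received valid transaction in the block it produces in the next round. Non-censoring parties therefore act, as far as liveness of $\tx$ is concerned, like protocol followers. The power controlled by non-censoring parties, excluding the current deviator, is then at least $1-\livenessThreshold$, so liveness with parameter $\livenessParam$ still holds with overwhelming probability. Step 2 then repeats verbatim at every step of the path. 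I expect this to require a careful reading of the liveness assumption as depending only on non-censorship, not on full protocol adherence, and I would state that explicitly as the key lemma.
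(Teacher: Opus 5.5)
Your proposal is correct and follows essentially the same route as the paper. The block-reward and fee components are handled by $\proto$'s compliance under $\execution_{\msf{-MEV}}$, since validity is the only difference between the two protocols. The MEV component is bounded by $\negl$ because claiming it requires censoring $\tx$ for $\vdfParam > \livenessParam$ rounds, which a single party with at most $\livenessThreshold$ power cannot do alone.

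Your explicit induction over the deviation path and your ``key lemma'' are what the paper uses implicitly when it asserts that, if all parties other than $\party$ are $\censorInfractionPredicate$-compliant, $\party$ cannot censor $\tx$ that long except with negligible probability. The paper takes that step as immediate rather than proving it, so stating it explicitly strengthens your version without changing the argument.
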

% \ifsubmission
% \else
\begin{proof}
    Similar to the other theorems, $\party$ can only increase its utility when
    MEV opportunities exist by claiming them. Under $\vdfProto$, the only way to
    claim a transaction's MEV opportunity is to censor it for at least
    $\vdfParam$ rounds.

    The fact that $\proto$ is $(\epsilon, \censorInfractionPredicate$)-compliant
    means that $\party$ can increase its utility, that corresponds to block
    rewards and transaction fees, by less than $\epsilon$ when censoring a
    transaction. Note that MEV opportunities play no role here, since we assume
    that $\proto$ is executed under $\execution_\msf{-MEV}$.
    Therefore, since $\proto$ and $\vdfProto$ differ only in terms of
    transaction validity, when $\vdfProto$ is executed under
    $\execution_\msf{+MEV}$, $\party$ can again increase its utility, that
    corresponds to block rewards and transaction fees, by less than $\epsilon$
    when censoring a transaction. In other words, MEV opportunities cannot
    increase a party's expected block rewards and transaction fees.

    As a result, $\party$ can increase its utility by more than $\epsilon$ only
    if it can claim the MEV opportunity of a transaction $\tx$. To do this,
    $\tx$ should be censored for $\vdfParam$ rounds. Since $\vdfParam >
    \livenessParam$, claiming $\tx$'s MEV opportunity implies breaking liveness.
    By assumption $\party$ cannot unilaterally break liveness. Therefore, if all
    parties except $\party$ are $\censorInfractionPredicate$-compliant, then
    $\party$ cannot censor $\tx$ for $\vdfParam$ rounds unless with negligible
    probability (of breaking liveness). Hence $\party$'s expected utility in
    terms of $\tx$'s MEV opportunity is negligible.
\qed
\end{proof}

% \fi

\ifenhanced
\section{Enhancing Delays with Proof-of-Work}\label{sec:protocol-enhanced}

Blockchain input causality captures attacks where the adversary $\adversary$
reacts to a new transaction. However, an adversary could possibly predict the
existence of a MEV opportunity and prepare their attack beforehand.

For example, consider a decentralized exchange. A user $\party$ makes a trade
that buys $1{,}000$ tokens but also yields an MEV opportunity of \$$2{,}000$. This
opportunity can be exploited if someone sandwiches $\party$'s trade between
two trades of their own, where the first buys $1{,}000$ tokens, thereby forcing $\party$
to trade at a higher price, and the other sells them at the artificially
increased price.

If $\adversary$ can predict such opportunity, then they can bypass the
VDF protection by proactively creating the front and back-running transactions.
Specifically, $\adversary$ evaluates the VDF on the two transactions and then
keeps them private, until -- if -- an honest exploitable transaction appears.
Because the VDF computation incurs negligible cost and $\adversary$ chooses
adaptively whether to publish their transactions or not, this attack has effectively
zero cost.

Even if $\adversary$ cannot precisely predict a trade, they can increase
their chances of a successful attack by creating multiple front and back-running
transactions. In the above example, $\adversary$ can create $1{,}000$ transactions
that buy $1$ token each, and $1{,}000$ that sell respectively.\footnote{This is
a thought experiment, as in practice blockchains can process far fewer
transactions per second than $1{,}000$.} Therefore, $\adversary$ can exploit any similar MEV
opportunity up to $1{,}000$ tokens. Again observe that, since the VDF
computation's cost is negligible, the cost of creating $2{,}000$ transactions is
also negligible.

% \subsection{Leveraging Freshness}
In this section, we present an enhancement to the transformation
of~\cref{sec:protocol}, which restricts adversaries that perform a
predicting MEV attack. The main idea is to use PoW to increase the
cost and restrict the amount of predicting transactions that a
computationally-bounded adversary can produce. In addition to this, we introduce
a ``freshness'' requirement, inspired by
FruitChains~\cite{DBLP:conf/podc/PassS17}, such that a transaction is invalid if
too much time has passed since its creation.

Under the new transformation, a transaction is a tuple
$
\tx = \langle \payload, \nonce, \blockRef, (\vdfVal, \vdfProof) \rangle
$,
where $\blockRef$ is a reference (\eg hash) to a ledger element\footnote{When
the ledger is implemented via a blockchain, $\blockRef$ is a block's reference.}
and $\nonce$ is an integer.\footnote{$\nonce$ will act as the nonce
of the PoW mechanism.}

As in~\cref{sec:protocol}, we again define a transformation of the
validity predicate. Here, the transformed validity
predicate is parameterized by a VDF function $\vdf{=}\langle \setup, \eval,
\verify \rangle$, a commitment scheme $\msf{CS} = \langle \commit, \reveal \rangle$, and a
hash function $\hash$. It is also parameterized by two extra values:
\begin{inparaenum}[(i)]
    \item $\powTarget$ sets a PoW difficulty threshold that a transaction should satisfy;
    \item $\freshnessParam$ sets a freshness threshold on how old the referred element can be.
\end{inparaenum}

\remark{The parameter $\freshnessParam$ is expressed in rounds. Although in
a blockchain setting it is impossible to know the exact real-world time
when a transaction is published, it can be approximated with some
accuracy using the timestamp of each
block~\cite{cryptoeprint:2023/1648}. For ease of notation, we abstract the
timestamp of each element published on the ledger via a function
$\timestamp(\cdot)$.}

The enhanced validity transformation is defined
in~\cref{eq:enhanced-validity-predicate} with the following notation:
\begin{inparaenum}[(i)]
    \item $\ledger[-i]$ is the $i$-th element of $\ledger$ when counting
        backwards, \eg $\ledger[-1]$ is the last element in $\ledger$;
    \item $\langle \rangle$ is a transformation of an array of arbitrary
    messages to an element in $\hash$'s domain.
\end{inparaenum}

\begin{equation}\label{eq:enhanced-validity-predicate}
    \begin{split}
        \validate_{\vdf, \msf{CS}, \hash}(\tx, \ledger) = \\
        \validate(\payload, \ledger) \land \\
        \verify(C, \vdfVal, \vdfProof) \land \reveal(C, \payload) = 1 \land \\
        \hash(\langle \hash(\langle \payload, \blockRef \rangle), \nonce \rangle) < \powTarget \land \\
        \ledger[-i] = \blockRef \land \timestamp(\ledger[-i]) - \timestamp(\ledger[-1]) < \freshnessParam
    \end{split}
\end{equation}

Intuitively, we require that:
\begin{inparaenum}[(i)]
    \item $\tx$ satisfies $\validate$ and (its commitment) has been evaluated by $\vdf$ (as in \cref{sec:protocol});
    \item $\tx$'s payload satisfies the PoW inequality with difficulty $\powTarget$; 
    \item $\tx$ points to a ledger element published within the last $\freshnessParam$
    rounds.
\end{inparaenum}

\subsection*{Game Theoretic Analysis}

The enhanced transformation of~\cref{eq:enhanced-validity-predicate} is
backwards compatible with the standard transformation
of~\cref{eq:validity-predicate}, so all results of~\cref{sec:protocol} also
apply here. Therefore, we only need to analyze the cases when an adversary
predicts the existence of an MEV opportunity and starts crafting a front-running
transaction before the honest transaction is received by all parties. 

\emph{Utility.}
Now, each party's utility is the absolute amount of income (same as
in~\cref{sec:protocol}) \emph{minus} their cost of producing transactions.

\emph{Transaction cost.}
The PoW mechanism now incurs a non-negligible cost in creating a transaction.
The threshold $\powTarget$ is the same for all transactions, so the cost of
creating any valid transaction is fixed and denoted by $\cost$.

\emph{Executions.} There are two execution types,
$\execution_\msf{-MEV}$ and $\execution_\msf{+MEV}$, as in~\cref{sec:protocol}.

Theorem~\ref{thm:enhanced} % \ifsubmission (proof in~\cref{proof:enhanced}) \fi
presents a set of necessary and sufficient conditions
for a protocol parameterized by $\validate_{\vdf, \msf{CS}, \hash}$ to be an equilibrium.
Briefly, no party will deviate if the 
expected income from the MEV opportunity is lower than the cost of
creating a transaction.
Intuitively, this requirement can be realized via a tradeoff: either the
expectation of the MEV opportunity is lowered, or the cost of creating a
transaction is increased. If the cost of creating a transaction is high, then
the protocol can tolerate higher and more predictable MEV opportunities.
However, this also worsens the usability of the system, since regular users are
required to spend more time and money to use it. On the other side, if MEV
opportunities are small and rare or unpredictable, then the expected profit from
claiming them is reduced and so the transactions' cost can follow suit.

\begin{theorem}\label{thm:enhanced}
    Assume:
        \begin{itemize}
            \item No party controls more than $\livenessThreshold$ power.
            \item A distributed ledger protocol $\proto$
                (cf.~\cref{sec:ledger-definition}), which guarantees liveness
                with parameter $\livenessParam$ with negligible error
                probability, if all parties that follow $\proto$ control $1 -
                \livenessThreshold$ power.
            \item The cost of censoring a transaction for less than
                $\livenessParam$ is negligible.
            \item A VDF function $\vdf$ with delay parameter $\vdfParam$
                (cf.~\cref{sec:vdf-definition}).
            \item A binding commitment scheme $\msf{CS} = \langle \commit, \reveal \rangle$.
            \item A ledger protocol $\vdfProto$, constructed by applying the
                validity predicate transformation $\validate_{\vdf, \msf{CS}, \hash}$ (cf.
                Eq.~\eqref{eq:enhanced-validity-predicate}) on $\proto$, with
                VDF parameter $\vdfParam > \livenessParam$.
            \item The cost of producing a transaction is $\cost$.
        \end{itemize}

    If $\proto$ is an $\epsilon$-Nash equilibrium for every execution
    $\execution_\msf{-MEV}$, then $\vdfProto$ is an
    $\epsilon'$-Nash equilibrium if and only if for every execution $\execution_\msf{+MEV}$,
    it holds that
    for every valid transaction $\tx$, which yields a MEV opportunity
    $\mevOpportunity_\tx$, and every round $\round$,
    $\epsilon' < \mevOpportunity_\tx \cdot \sum_{\round' \in [\round + \vdfParam - \livenessParam, \round + \freshnessParam])} \prob[\round_\tx = \round'] - \cost + \epsilon$,
    where $\round_\tx$ is the round during which $\tx$ is received by all
    parties.
\end{theorem}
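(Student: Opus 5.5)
We reduce to the analysis of \cref{sec:protocol} and then isolate the single new profitable deviation, the \emph{predicting} front-run. The enhanced predicate of Eq.~\eqref{eq:enhanced-validity-predicate} is a strengthening of Eq.~\eqref{eq:validity-predicate}. It keeps the VDF, commitment and $\validate$ checks, and adds PoW and freshness conditions that do not depend on block production. Hence the arguments of \cref{thm:protocol-equilibrium} carry over: since $\proto$ is an $\epsilon$-Nash equilibrium under $\execution_\msf{-MEV}$, no unilateral deviation from $\vdfProto$ changes expected fixed block rewards or fee income by more than $\epsilon$. Likewise, any \emph{reactive} MEV attack, i.e.\ one that starts building $\tx'$ after $\tx$ is received, needs $\tx$ to be censored for $\vdfParam > \livenessParam$ rounds, and so succeeds only with $\negl$ probability. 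So the only remaining term in a deviator's utility is the net gain from a predicting attack, and the proof reduces to computing it.

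Next, we compute that net gain. Suppose a party $\party$ starts computing a front-running transaction $\tx'$ at some round $\round$. It must solve the PoW, costing $\cost$. The VDF makes $\tx'$ valid no earlier than $\round + \vdfParam$. Freshness makes $\tx'$ invalid after round $\round + \freshnessParam$, since $\blockRef$ must point to an element at most $\freshnessParam$ rounds old. We then argue that $\tx'$ can successfully front-run $\tx$ exactly when $\round_\tx \in [\round + \vdfParam - \livenessParam, \round + \freshnessParam]$.
\begin{itemize}
\item If $\tx$ arrives earlier than $\round + \vdfParam - \livenessParam$, then by liveness it is finalized before $\tx'$ becomes valid, except with $\negl$ probability.
\item If $\tx$ arrives in the window, then $\party$ only needs to delay $\tx$ for fewer than $\livenessParam$ rounds until $\tx'$ is valid. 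By the assumption that such short censorship is costless, this is free, and $\party$ then publishes $\tx'$ ahead of $\tx$ (a rushing party following $\strategy_\text{MEV}$-style ordering).
\item If $\tx$ arrives after $\round + \freshnessParam$, then $\tx'$ is stale and invalid.
\end{itemize}
The expected gain of the deviation is therefore $\mevOpportunity_\tx \cdot \sum_{\round'} \prob[\round_\tx = \round'] - \cost$, with the sum over that window, up to $\negl$ terms. Combining this with the $\epsilon$ slack from block rewards and fees, the best unilateral deviation gains at most $\epsilon$ plus this quantity.

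Finally, both directions of the ``if and only if'' follow from this single accounting. For sufficiency, if the stated inequality bounds the predicting gain by the allowed slack for every $\tx$ and $\round$, no deviation exceeds $\epsilon'$. For necessity, if some $\tx,\round$ violate it, we exhibit the explicit deviation: start $\tx'$ at $\round$, censor $\tx$ briefly, and publish $\tx'$. This deviation attains the gain, contradicting $\epsilon'$-equilibrium.

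The main obstacle is making the window endpoints rigorous. The hardest part is showing that censorship of fewer than $\livenessParam$ rounds is always achievable and sufficient for a single party with power below $\livenessThreshold$. I would first try to establish this by arguing that, in the worst case, $\tx$ lands exactly at its liveness deadline, and by treating the freshness timestamp approximation as exact. The direction of the inequality as stated (with $\epsilon'$ on the left) must also be matched carefully to this accounting.
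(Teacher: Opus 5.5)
Your proposal follows the paper's proof essentially step for step. The paper likewise takes the block-reward and fee bounds from $\proto$ being an $\epsilon$-equilibrium and rules out reactive attacks as in \cref{thm:protocol-equilibrium}; it obtains the window $[\round + \vdfParam - \livenessParam, \round + \freshnessParam]$ from the VDF-versus-liveness argument on one end and freshness on the other, and proves the two directions by comparing expected MEV income with $\cost$ and exhibiting the same $\strategy_\text{MEV}$ deviation (craft $\tx'$ at round $\round$, release it when $\tx$ appears, censor $\tx$ until $\tx'$ is included). The paper does not resolve the obstacles you flag either: it takes the feasibility of short censorship directly from the hypothesis that censoring for fewer than $\livenessParam$ rounds has negligible cost, and it argues both directions as ``expected MEV income below (resp.\ above) $\cost$'', i.e.\ the reading in which the deviation gain is bounded by the slack rather than the literal inequality with $\epsilon'$ on the left, so you need not derive either point further.
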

% \ifsubmission
% \else
\begin{proof}
    The proof relies on similar arguments to~\cref{thm:protocol-equilibrium}.
    
    The difference between the two types of executions lies only in MEV
    opportunities. Since $\proto$ is an $\epsilon$-Nash equilibrium, a party
    $\party$ can increase its utility only by trying to claim the MEV
    opportunity of a transaction $\tx$.
    
    To do so, $\party$ needs to create a transaction $\tx'$ that front-runs
    $\tx$. Since no party controls more than
    $\livenessThreshold$, front-running succeeds only if $\tx'$ is created
    before $\tx$ is broadcast, otherwise $\tx$ is finalized before the VDF
    evaluation of $\tx'$ concludes (as in
    \cref{thm:protocol-equilibrium}). In particular, $\party$ should start
    creating $\tx'$ on a round $\round' < \round_\tx - \vdfParam +
    \livenessParam$. Otherwise, if $\round_\tx \leq \round' + \vdfParam -
    \livenessParam$, then the VDF evaluation of $\tx'$ will not have finished
    before round $\round_\tx + \livenessParam$, at which point $\tx$ is
    finalized.
    Additionally, $\party$ should start crafting $\tx'$ at most
    $\freshnessParam$ rounds before $\tx$ is broadcast. Otherwise, when
    $\tx$ is broadcast, $\tx'$ will have become too old to satisfy the freshness
    requirement.
    
    Therefore, to argue that $\vdfProto$ is an equilibrium, we only
    need to consider strategies under which $\party$ starts crafting $\tx'$ at
    least $\vdfParam - \livenessParam$ and at most $\freshnessParam$ rounds
    before $\tx$ is broadcast.
    Now observe that $\party$ cannot know a priori the exact round when (and if)
    $\tx$ will be broadcast. Instead, on every round $\round$, $\party$ can
    compute the probability that $\tx$ will be broadcast on a specific future round.
    Consequently, $\party$'s expected income in terms of the MEV of
    $\tx$ is the sum of all probabilities that $\tx$ is published on a round
    after $\round' + \vdfParam - \livenessParam$ and before round $\round' +
    \freshnessParam$, \ie on a round during which $\tx'$ can successfully claim the MEV
    opportunity of $\tx$. This expected income is computed as:
    $\mevOpportunity_\tx \cdot \sum_{\round' \in [\round + \vdfParam - \livenessParam, \round + \freshnessParam])} \prob[\round_\tx = \round']$.
    
    \noindent\emph{If direction:}
    If, on every round, the expected income in terms of MEV across all
    transactions is less than the cost of producing a transaction, then $\party$
    will never try to deviate from $\vdfProto$ by trying to claim an MEV
    opportunity.
    
    \noindent\emph{Only if direction:}
    We define the strategy $\strategy_\text{MEV}$, parameterized by
    $\tx$, where a party $\party$:
    \begin{inparaenum}[(i)]
        \item creates on round $\round$ a transaction $\tx'$ that claims the MEV
            opportunity of $\tx$;
        \item upon receiving $\tx$, sends $\tx'$ to all parties;
        \item censors $\tx$ until $\tx'$ is included in the ledger (that is,
        before $\tx$);
        \item for every other operation, $\party$ follows $\vdfProto$.
    \end{inparaenum}
    If the expected income of $\party$ \wrt the MEV of $\tx$ is higher than the
    cost of producing $\tx'$, then $\party$ is incentivized to deviate from
    $\vdfProto$ and follow $\strategy_\text{MEV}$.
\qed
\end{proof}

% \fi

\fi
\section{Implementation Details}\label{sec:implementation}

We now discuss details about a real-world implementation
of our mechanism.

\subsection{Verifiable Delay Functions}

\emph{VDF Delay Parameter.}
\cref{sec:protocol} shows that the VDF delay parameter should be larger
than the liveness parameter. In practice, the liveness parameter is
hard to estimate as it depends on various factors, such as network
congestion, connectivity delays, possible attacks, etc. Nonetheless, under normal
circumstances transactions are published in the next few blocks after being sent
to the network. For example, the median confirmation time for Bitcoin
transactions is $40{-}60$ minutes, whereas in Ethereum it is between $15{-}20$
minutes.\footnote{Confirmation time is the delay between sending
a transaction to the network and it being considered finalized on the ledger. As
of January $2025$, the delay until a Bitcoin transaction is published on a
block is $5{-}10$ minutes (cf.
\href{https://www.blockchain.com}{blockchain.com}). Also, cryptocurrency
exchanges, like
\href{https://support.kraken.com/hc/en-us/articles/203325283-Cryptocurrency-deposit-processing-times}{Kraken}
and \href{https://help.coinbase.com/en/coinbase/getting-started/crypto-education/glossary/confirmations}{Coinbase},
set the safety parameter to $2{-}3$ block confirmations, \ie $30{-}40$ minutes. Similarly, in
Ethereum the delay until a single confirmation ranges between a few seconds and
$5$ minutes, whereas the required time for enough confirmations to be produced
is approx. $15$ minutes.}

Taking a conservative approach, we recommend setting the VDF delay parameter to
be twice the safety parameter. Therefore, in Bitcoin this would be approx. $60$
minutes, whereas in Ethereum it would take approx. $30$ minutes.

\emph{VDF Candidates.}
Two primary VDF candidates can be used in practice:
Wesolowski's~\cite{cryptoeprint:2018/623} and
Pietrzak's~\cite{cryptoeprint:2018/627}. These have been explored and analyzed
in the
literature~\cite{cryptoeprint:2018/712,cryptoeprint:2022/755,cryptoeprint:2020/332,cryptoeprint:2023/626},
so here we will offer a brief overview.

Wesolowski's VDF~\cite{cryptoeprint:2018/623} has been previously analyzed
extensively. The analysis of~\cite{cryptoeprint:2020/332} shows that, for
$2^{25}$ exponentiations, the VDF evaluation and the computation of the
proof\footnote{The proof is used for speeding up the verification of the VDF
result.} requires approx. $95$ seconds, whereas verification takes approx. $0.3$
ms. This work also notes that the best available hardware (FPGA) for evaluating
this VDF is approximately $29$ faster than a CPU. Chia's VDF
contest~\cite{chia-vdf-contest} offers similar results. Here, for $2^{23}$
iterations, the VDF evaluation requires between $91.3{-}97.4$ seconds (with and
without SIMD/GPU resp.).

Pietrzak's VDF~\cite{cryptoeprint:2018/627} has been recently analyzed in the
context of Sprints~\cite{cryptoeprint:2023/626}, a consensus protocol that
combines PoW and this VDF. This analysis shows that, for a VDF evaluation that
corresponds to approx. $570$ seconds ($95$\% of a $10$ minute block time), the
corresponding verification requires $100{-}500$ms.

Our application uses the VDF only to impose a delay on transaction generation.
However, the VDF proofs need to be verified by all nodes in the network, since a
transaction's validity depends on them. To choose which option (Wesolowski's or
Pietrzak's) is suitable for our use case, we estimated the time needed to verify
a VDF proof compared to the duration of its computation.

\emph{Experimental evaluation.}
In our experiments, we evaluated the two VDF candidates on a Bitcoin-like
transaction object for various delay parameters. For each parameter, we repeated
the computation $50$ times and obtained the average time needed for the VDF
computation and for the verification of the VDF proof.

For Wesolowski's VDF, we used Harmony's
implementation.\footnote{\url{https://github.com/harmony-one/vdf}} We set the
difficulty to $5{,}000$, resulting in  $14.966$ seconds average computation
time, with average verification time $736$ ms. We then increased the difficulty
and obtained the following measurements:
\begin{inparaenum}[(i)]
    \item avg. computation time $151.965$ seconds, mean verification time $728$ ms;
    \item avg. computation time $597.247$ seconds (approx. $10$ mins), mean verification time $731$ ms;
    \item avg. computation time $1045.754$ seconds (approx. $17$ mins), mean verification time $762$ ms;
    \item avg. computation time $1801.844$ seconds (approx. $30$ mins), mean verification time $744$ ms;
    \item avg. computation time $3677.779$ seconds (approx. $60$ mins), mean verification time $782$ ms.
\end{inparaenum}
Evidently, the verification time remains less than $800$
milliseconds even when the VDF computation requires $60$ minutes. 
Additionally, the proof size is the same in all of these executions, consistently
below $516$ bytes; although the size increase is significant,\footnote{For
comparison, the average Bitcoin transaction is approx. $400$ bytes
[\url{https://bitcoinvisuals.com/chain-tx-size}, January $2025$].}, it is not
prohibitively expensive in practice. In summary, Wesolowski's  VDF is a
prime choice for our use case.

For Pietrzak's VDF, we used the implementation of Injective Labs.\footnote{\url{https://github.com/InjectiveLabs/vdf}}
We set the difficulty to
$12{,}000{,}000$, which resulted in an average computation time of $12.36$ seconds;
for these computations, the average verification time was $8.09$ seconds.
In essence, the verification time for Pietrzak's VDF is logarithmic to the
computation time, which makes it not suitable for our use case.

\emph{Outsourcing VDF computation.}
The VDF computation can be outsourced, \eg using protocols like
OpenSquare~\cite{DBLP:conf/ccs/ThyagarajanGBKS21}. This could enable a
competitive market where parties offer VDF evaluation as a service while using
state-of-the-art, efficient hardware, \st the computation is done efficiently
and cost-effectively. Note that, in our application, it is important to maintain
the assumption that a transaction's payload is received by any party, other than
the user that creates it, only \emph{after} the VDF evaluation has finished. Otherwise, \eg
if the user outsources the evaluation to an untrusted service by directly
submitting the payload, the service could start evaluating their own MEV-attacking
transaction in parallel. To avoid relying on trust, our
transformations require the VDF evaluation to be performed on the payload's commitment.
Therefore, a user can send the commitment to a service and, since the
commitment scheme is hiding, the service does not learn the payload while
evaluating the VDF.

\subsection{Historical MEV Opportunities}

As a final step, we will shortly review historical cases of MEV opportunities.
This will offer a better view of the severity of the threat in the ecosystem, as
well as inform our results. %, \eg the bounds of the enhanced transformation (\cref{sec:protocol-enhanced}).

First, a recent paper quantified MEV in Layer-2 protocols
(Polygon, Arbitrum, Optimism) and offers interesting
insights~\cite{bagourd2023quantifying}. In Polygon, the researchers
analyzed $7.7$M transactions, which yielded \$$213$M of total extracted MEV.
Although the top opportunity offered \$$165$M in profit, $99.99$\% of all
transactions corresponded to $\leq \text{\$}100$ MEV profit, of which $90$\%
corresponded to transactions with $\leq \text{\$}1$. In Arbitrum, $87{,}706$
transactions were analyzed, at a total extracted MEV \$$250$K. Among them,
$99.99$\% corresponded to $\leq \text{\$}100$ MEV profit, of which $90$\%
corresponded to transactions with $\leq \text{\$}10$ MEV. In Optimism, $900$K
MEV transactions were analyzed with total extracted MEV \$$120$K. Again,
$99.99$\% of all transactions corresponded to $\leq \text{\$}30$ MEV profit, of
which $90$\% corresponded to transactions with $\leq \text{\$}1$. As can be
seen from these results, although the maximum profit can be very high (in the
order of millions USD), the vast majority of opportunities are relatively small (in the order of transaction fees).

Second, we reviewed the publicly available data that MEV Boost
offers.\footnote{\url{https://mevboost.pics}}
This set contains over $3$M blocks with MEV
opportunities between $15{/}9{/}2022$-$18{/}10{/}2023$, that is a block with
opportunities every approx. $11$ seconds. Although the maximum extracted MEV was
\$$3{,}149{,}321$, the median MEV per block is \$$84.96$, with $90$\% of blocks
offering less than \$$345.34$ in MEV profit. Note here that multiple MEV
opportunities may exist per blocks, so the same numbers on the individual
opportunity level could be smaller. Nonetheless, these results are consistent
with those from~\cite{bagourd2023quantifying}.

Finally, we parsed Flashbots' publicly available dataset of MEV opportunities.\footnote{\url{https://collective.flashbots.net/t/publishing-flashbots-protect-and-mev-share-data}} This dataset
offers approx. $3.2$M cases of sandwich and over $2.8$M cases of arbitrage
opportunities. The data set spans $18$ months, between
$15{/}1{/}2022$-$20{/}6{/}2023$, where an opportunity emerged every $7$ seconds
on average. We sampled and analyzed $35{,}000$ cases from each type at random.
For sandwich cases, the maximum opportunity
was \$$3.3$M,\footnote{Transactions
\href{https://etherscan.io/tx/0x34ecd811699e410cefa1d92d73ec544f39a169ccc06c5104b7571fa31e2a6f16}{0x34ecd811699e410cefa1d92d73ec544f39a169ccc06c5104b7571fa31e2a6f16}
and \href{https://etherscan.io/tx/0x06d48c42751864bb4b3a733e880ba23a8623f839c8a5440da23c4e22363abe10}{0x06d48c42751864bb4b3a733e880ba23a8623f839c8a5440da23c4e22363abe10}.}
but the median MEV per case was \$$92$, with $90$\% of
cases offering less than \$$1{,}175$ in MEV profit. Arbitrage cases are slightly
higher, with the maximum opportunity being \$$9.18$M,\footnote{Transaction
\href{https://etherscan.io/tx/0x08126591af2acfe355e0668b6cc0c7b18257163c2dd9e067bb5842c68c879668}{0x08126591af2acfe355e0668b6cc0c7b18257163c2dd9e067bb5842c68c879668}.}
the median MEV per case at \$$318$, and $90$\% of cases offered less than
\$$4{,}165$. We also find that claiming these opportunities requires, on average,
approx. $4.78$ transfers for arbitrage and $6.06$ for sandwich cases.

In summary, multiple sources offer similar insights on MEV opportunities.
Although some opportunities can be in the order of millions USD, the
overwhelming majority of opportunities is below a few hundred USD, or even
less than \$$100$. Also, claiming them requires multiple transfers, which are often
complex and involve multiple smart contracts and accounts.
\ifenhanced
Therefore, a mechanism like the extended transformation
of~\cref{sec:protocol-enhanced}, with a PoW cost parameter in the order of a
few USD, could assist in countering the vast majority of MEV threats and make
the system less costly, and arguably more usable, for regular users.
\fi

\subsection{Usability Considerations}

Introducing delays in transaction generation can adversely impact the system's
usability and impair its ability to respond to external market conditions. For
example, MEV often arises in decentralized exchanges via arbitrage
opportunities, which often exist due to the delayed response
to market movements in centralized exchanges.
Using our mechanism, these arbitrage opportunities could become more
frequent and longer, exacerbating the problem. These delays
could also result in unnecessary liquidations or hinder the users'
ability to manage positions timely.
In essence, although our mechanism is useful in theory, it is not necessarily
suitable for all types of application where MEV opportunities exist.

Intuitively, our mechanism is suitable for preventing front-running in
decentralized applications that do not require frequent and immediate
user responses. For example, NFT generation and registration is such
application, where users typically can afford to wait
before acquiring and/or transferring their tokens. On the other hand, our
mechanism may not be suitable decentralized exchanges and applications
sensitive to price movements, where users need to respond quickly to market
conditions, as discussed in the previous paragraph. Finally, our mechanism is
not suitable for defending against MEV attacks that rely (solely) on
back-running, since its focus is on front-running attacks.

\remark{Our mechanism does not need to be implemented at the ledger protocol
level. Instead, the verifiable delay requirement could be enforced on the smart
contract level, as long as the ledger's smart contract system allows the
verification of delay proofs. This enables some flexibility, since our
mechanism could be employed only on suitable DeFi applications, whereas the
smart contracts of others are implemented via standard practices.}

\ifenhanced
\subsection{Proof-of-Work Cost per Transaction}

\cref{thm:enhanced} shows that, by increasing the cost of producing a
transaction, a protocol can become an equilibrium even when predictable MEV
opportunities exist. As shown above,
opportunities come up every few seconds with a median profit of few tens USD.
Therefore, for the inequality of \cref{thm:enhanced} to hold, the cost $\cost$
of creating a transaction should be also in the order of \$$100$.

Note that such cost would make the system (possibly unreasonably) costly to
use. Also, reaching such cost using PoW is not straightforward in
practice.
For example, for a USA resident, electricity costs range
between $0.0937{-}0.3276$
\$/kWh,\footnote{\url{https://www.electricchoice.com/electricity-prices-by-state}}
so consuming \$$100$ in electricity requires between $300{-}1000$ kWh.
Using a CPU for this is untenable, since a standard Intel CPU
running at $2.0$GHz consumes only $150$ W per hour.
Even using dedicated hardware though is not very practical. For example, a
popular SHA256-based ASIC is Bitmain Antminer S21, which has nominal electricity
consumption $3.5$kW/h.\footnote{\url{https://www.bitmain.com}} Consuming \$$100$
in electricity using a single such machine would require between $85{-}285$
hours, which is clearly not practical for the creation of a single transaction.
\fi

\subsection{Alternative Delay Primitives}

Although we use VDFs, alternative delay-based cryptographic primitives could
also be used. Our application requires a
primitive that is non-parallelizable, provably delayed, and whose solutions
are efficiently verifiable. VDFs guarantee these requirements, but they
also aim to guarantee other properties, such as output uniqueness. In our case,
one could employ alternative primitives, that are more efficient \wrt the
specific properties that we want to achieve. One such possible alternative is
Proof-of-Sequential-Work
(PoSW)~\cite{cryptoeprint:2011/553,cryptoeprint:2018/183}. The main difference
between PoSW and VDFs is that PoSW output is not unique, \st a party can create
multiple valid outputs from a single solution. Although this is not suitable for
other standard applications of delay-based primitives, \eg creating a randomness
beacon, it is not an issue in our case. We leave the exploration of other
delay-based primitives for future work and refer to~\cite{cryptoeprint:2023/687}
as a starting point.

\section{Conclusion}

Our work explores a defense mechanism against Maximal Extractable Value (MEV) in
distributed ledgers, which depends on verifiable delays. Our main idea is
slowing down the ability of a party to respond to a newly-observed transaction,
in a way that a response is only possible after the transaction
is finalized. We present a transformation of a ledger's validity
predicate, which makes use of a Verifiable Delay Function (VDF), and show that
any protocol that is either secure, an equilibrium, or compliant \wrt transaction
censorship when no MEV opportunities exist, can be modified via our
transformation to a protocol that defends against MEV attacks.
\ifenhanced
We then enhance our transformation with Proof-of-Work to defend against predictable MEV
opportunities. 
\fi
Finally, we discuss various implementation details, such as
exploring VDF candidates and reviewing
historical MEV opportunities to evaluate the effectiveness of our mechanism.
Historical analysis indicates that the proposed mechanism can effectively defend
against most MEV cases, albeit only non-predictable ones.

\paragraph{Future work.}
Our work also poses various questions that deem further exploration. Firstly,
although VDFs have been widely discussed and used in many applications, they
lack rigorous theoretical lower bounds on efficiently evaluating them. Until
such bounds are found, mechanisms such as the one presented in this paper rely
only on practical security estimates, \eg based on the most efficient VDF
implementation, instead of concrete theoretical computations.
\ifenhanced
Additionally, the PoW cost estimates of~\cref{sec:implementation} show
that~\cref{thm:enhanced} is mostly theoretical at this point, so future work
could explore increasing a transaction's creation cost in a client-based
manner, that is without requiring complex dedicated hardware. 
\fi
Finally, our
mechanism may be suitable for some applications but not others, \eg that
require frequent and rapid responses to market movements, so a more thorough
exploration of the high-level properties of DeFi applications is needed to
identify in which cases verifiable delays can prevent MEV without incurring
adverse usability effects.

\bibliographystyle{splncs04}
\bibliography{pubs}

\ifsubmission
\appendix
\ifsubmission
\section{Preliminaries}\label{sec:preliminaries}
\else
\subsection{Primitives}\label{sec:preliminaries}
\fi

\ifsubmission
\subsection{Commitment Schemes}\label{sec:commitment}
\else
\subsubsection{Commitment Schemes}\label{sec:commitment}
\fi
A commitment scheme consists of two algorithms,
$\commit: \{ 0, 1 \}^\star \rightarrow \{ 0, 1 \}^\secparam$
and $\reveal: \{ 0, 1 \}^\secparam \times \{ 0, 1 \}^\star \rightarrow \{ 0, 1 \}$.
Intuitively, a commitment scheme enables a party to commit to a chosen value
and reveal it later, while keeping it hidden (hiding property) and not being
able to repudiate it (binding property). Note that, in order for the properties
to hold, the committed value should be large and random enough, so typically
the plain value is committed along with a large enough random nonce.  Briefly,
a cryptographic commitment scheme should satisfy the following properties:
\begin{itemize}
    \item \emph{Binding}: For all PPT algorithms, it should be infeasible to
        output $x \neq x'$ and $\msf{nonce}, \msf{nonce}'$, where
        $|\msf{nonce}| = |\msf{nonce}'| = 2^\secparam$, such that
        $\commit(\langle x, \msf{nonce} \rangle) = \commit(\langle x', \msf{nonce}' \rangle)$.
    \item \emph{Hiding}: Let $U_\secparam$ be the uniform distribution over the
        $2^\secparam$ opening values for security parameter $\secparam$. It
        should hold that, for all $x \neq x'$ the probability ensembles
        $\{ \commit(\langle x, U_\secparam \rangle) \}$ and $\{ \commit(\langle x', U_\secparam \rangle) \}$
        are computationally indistinguishable.
\end{itemize}

\section{Proofs}\label{sec:proofs}

\subsection{Enhancing Delays with Proof-of-Work}\label{proof:enhanced}

\ifsubmission
\textbf{Theorem~\ref{thm:enhanced}}
\emph{
    Assume:
        \begin{itemize}
            \item No party controls more than $\livenessThreshold$ power.
            \item A distributed ledger protocol $\proto$
                (cf.~\cref{sec:ledger-definition}), which guarantees liveness
                with parameter $\livenessParam$ with negligible error
                probability, if all parties that follow $\proto$ control $1 -
                \livenessThreshold$ power.
            \item The cost of censoring a transaction for less than
                $\livenessParam$ is negligible.
            \item A VDF function $\vdf$ with delay parameter $\vdfParam$
                (cf.~\cref{sec:vdf-definition}).
            \item A binding commitment scheme $\msf{CS} = \langle \commit, \reveal \rangle$.
            \item A ledger protocol $\vdfProto$, constructed by applying the
                validity predicate transformation $\validate_{\vdf, \msf{CS}, \hash}$ (cf.
                Eq.~\eqref{eq:enhanced-validity-predicate}) on $\proto$, with
                VDF parameter $\vdfParam > \livenessParam$.
            \item The cost of producing a transaction is $\cost$.
        \end{itemize}
    If $\proto$ is an $\epsilon$-Nash equilibrium for every execution
    $\execution_\msf{-MEV}$, then $\vdfProto$ is an
    $\epsilon'$-Nash equilibrium if and only if for every execution $\execution_\msf{+MEV}$,
    it holds that
    for every valid transaction $\tx$, which yields a MEV opportunity
    $\mevOpportunity_\tx$, and every round $\round$:
    $$\epsilon' < \mevOpportunity_\tx \cdot \sum_{\round' \in [\round + \vdfParam - \livenessParam, \round + \freshnessParam])} \prob[\round_\tx = \round'] - \cost + \epsilon$$
    where $\round_\tx$ is the round during which $\tx$ is received by all
    parties.
}
\fi

\fi

\end{document}